\def\di{\displaystyle}
\newcommand{\R}{\mathbb{R}}
\newcommand\del[1]{}
\newcommand\think[1]{}
\newcommand\new[1]{}
\newcommand\zus[1]{}
\newcommand\comd[1]{} 
\newcommand\Redd[1]{} 
\def\bdm{\begin{displaymath}}
\def\edm{\end{displaymath}}
\def\bea{\begin{eqnarray}}
\def\eea{\end{eqnarray}}
\newtheorem{theorem}{Theorem}[section]
\newtheorem{lemma}[theorem]{Lemma}
\newtheorem{definition}[theorem]{Definition}
\newtheorem{corollary}[theorem]{Corollary}
\begin{document}

\title[A stochastic invariantization method]{A stochastic invariantization method for It\^o stochastic perturbations of differential equations}

\author{Jacky Cresson, Yasmina Kheloufi, Khadra Nachi}
\address{Laboratoire de Math\'ematiques Appliqu\'ees de Pau, 
 Universit\'e de Pau et des Pays de l'Adour, 
Avenue de l'Universit\'e, BP 1155,
64013 Pau Cedex, France
\&
Universit\'e d'Es-S\'enia, Oran -- D\'epartement de math\'ematiques, Alg\'erie
}
\email{jacky.cresson@univ-pau.fr}
\address{Universit\'e d'Es-S\'enia, Oran -- D\'epartement de math\'ematiques, Alg\'erie}
\email{kheloufiyasmina@ymail.com}
\subjclass{60H10; 92B05; 60J28; 65C30}
\keywords{stochastic differential equations, model validation, Landau-Lifshitz equation, It\^o equations, ferromagnetism}

\begin{abstract}
In general, adding a stochastic perturbation to a differential equation possessing an invariant manifold destroys the invariance as far as the It\^o formalism is used. In this article, we propose an invariantization method for perturbations in the It\^o case which can be used to restore invariance. We then apply our results to develop a stochastic version of the Landau-Lifshitz equation. We discuss in particular previous results obtained by Etore and al. in \cite{Etore}.
\end{abstract}

\maketitle

\tableofcontents

\section{Introduction}

We consider a deterministic ordinary differential equation of the form
\begin{equation}
\label{ode}
\di\frac{dx}{dt}=f(x),\ \ x\in \mathbb{R}^n,\ n\in \mathbb{N}^* .
\end{equation}
A {\bf stochastic perturbation} is taken into account by adding a "noise" term to the classical deterministic equation as follows:
\begin{equation}
\di\frac{dx}{dt}=f(x) +"\mbox{\rm noise}" .
\end{equation}
and to replace the "noise" term by a stochastic one as
\begin{equation}
\label{sde}
dX_t = f(X_t ) dt + \sigma (X_t ,t) dW_t ,
\end{equation}
where $W_t$ is standard Wiener process. This procedure is for example well discussed in (\cite{oksendal2003stochastic}).\\ 

Of course, the main problem is in this case to {\bf find} the {\bf form of the stochastic perturbation}. We do not discuss this problem which is very complicated. We restrict our attention to the {\bf selection problem} which is concerned with the characterization of the set of {\bf admissible} stochastic models for a given phenomenon. By admissible we mean that the stochastic model satisfies some known constraints like positivity of some variables, conservation law, etc.  This {\bf selection} of a good candidate for a stochastic model of the phenomenon can be done in many ways. However, in our particular setting, dealing with the {\bf stochastic extension} of a known deterministic model, this selection is related to {\bf preserving} some specific {\bf constraints} of the phenomenon. For example, part of the Hodgkin-Huxley model describes the dynamical behavior of concentrations which are typically variables which belongs to the interval $[0,1]$. This property is independent of the particular {\bf dynamics} of the variables but is related to their {\bf intrinsic} nature. The same is true for the total energy of a mechanical system. This quantity must be preserved independently of the dynamics. We formulate the {\bf stochastic persistence problem} following our approach given in \cite{CrSz} in a different setting: \\

\noindent {\bf Stochastic persistence problem} : {\it Assume that a classical ODE of the form (\ref{ode}) satisfies a set of properties $\mathcal{P}$. Under which conditions a stochastic perturbation of the form (\ref{sde}) satisfies also properties $\mathcal{P}$ ?}\\

The previous problem lead to {\bf characterize} the set of $\sigma$ preserving the considered properties $\mathcal{P}$. classical properties are: Invariance of a given submanifold of $\mathbb{R}^n$, number of equilibrium points, stability properties of the equilibrium points, etc.\\

The literature on invariance of manifolds for stochastic differential equations is huge and most of the time abstract in particular for what concerns the stochastic analogue of the Nagumo-Brezis theorem. This explain perhaps why these results are not so well known in the applied community because the formulation of the conditions are not transparent for a given concrete system. In this article, we give a direct and simple derivation of a necessary and sufficient condition on the diffusion part in order that a submanifold globally defined as the preimage of a smooth function is preserved under a stochastic perturbation. The result depends drastically on the stochastic differential framework that one uses. In the Stratonovich case, the condition on the drift is the same as the one on the drift part. However, in the It\^o case, which covers most of the applications, the constraints for invariance are so strong that in many cases, one is enable to find a large class of admissible stochastic models. All these problems are discussed in Section \ref{sec-invariance}.\\

What to do if the framework for the model has to be the It\^o one ? An idea is then to developp a systematic and algorithmic {\bf invariantization method} in order to restore invariance in the It\^o case. This is done in Section \ref{sectioninvariantizationmethod} following an idea initiated by Etore and al \cite{Etore}. In order to illustrate our method, we apply it to an It\^o version of the Kubo stochastic Hamiltonian system and in Section \ref{landau} to construct a stochastic version of the {\bf Landau-Lifshitz equation}. We finally discuss the limitations and problems posed by the invariantization method.

\section{Reminder about It\^o stochastic differential equations}
\label{remindstoc}

In this article, we consider a parameterized differential equation of the form 

\begin{equation}
\label{DE}
dX_t = f (t,X_t,b)dt,\quad x\in \R^{n}
\tag{DE}
\end{equation}
where $b\in \R^{k}$ is a set of parameters, $f: \R^{n}\times \R^{k}\longrightarrow \R^{n}$ is a Lipschitz continuous function with respect to $x$ for all $b$.
We remind basic properties and definition of stochastic differential equations in the sense of It\^o. We refer to the book \cite{oksendal2003stochastic} for more details.\\

A {\it stochastic differential equation} is formally written (see \cite{oksendal2003stochastic},Chap.V) in differential form as  

\begin{equation}
\label{IE}
dX_t = f (t,X_t)dt+\sigma(t,X_t)dB_t ,
\tag{IE}
\end{equation}

which corresponds to the stochastic integral equation
\begin{equation}
X_t=X_0+\int_0^t f (s,X_s)\,ds+\int_0^t \sigma (s,X_s)\,dB_s ,
\end{equation}
where the second integral is an It\^o integral (see \cite{oksendal2003stochastic},Chap.III) and $B_t$ is the classical Brownian motion (see \cite{oksendal2003stochastic},Chap.II,p.7-8).\\

An important tool to study solutions to stochastic differential equations is the {\it multi-dimensional It\^o formula} (see \cite{oksendal2003stochastic},Chap.III,Theorem 4.6) which is stated as follows : \\

We denote a vector of It\^o processes by $\mathbf{X}_t^\mathsf{T} = (X_{t,1}, X_{t,2}, \ldots, X_{t,n})$ and we put $\mathbf{B}_t^\mathsf{T} = (B_{t,1}, B_{t,2}, \ldots, B_{t,l})$to be a $l$-dimensional Brownian motion (see \cite{karatzas},Definition 5.1,p.72),  $d\mathbf{B}_t^\mathsf{T} = (dB_{t,1}, dB_{t,2}, \ldots, dB_{t,l})$. We consider the multi-dimensional stochastic differential equation defined by (\ref{IE}). Let $F$ be a $\mathcal{C}^2(\mathbb{R}_+ \times \mathbb{R},\mathbb{R})$-function and $X_t$ a solution of the stochastic differential equation (\ref{IE}). We have 
\begin{equation}
dF(t,\mathbf{X}_t) = \frac{\partial F}{\partial t} dt + (\nabla_\mathbf{X}^{\mathsf T} F) d\mathbf{X}_t + \frac{1}{2} (d\mathbf{X}_t^\mathsf{T}) (\nabla_\mathbf{X}^2 F) d\mathbf{X}_t,
\end{equation}
where $\nabla_\mathbf{X} F = \partial F/\partial \mathbf{X}$ is the gradient of $F$ w.r.t. $X$, $\nabla_\mathbf{X}^2 F = \nabla_\mathbf{X}\nabla_\mathbf{X}^\mathsf{T} F$ is the Hessian matrix of $F$ w.r.t. $\mathbf{X}$, $\delta$ is the Kronecker symbol and the following rules of computation are used : $dt dt = 0$, $dt dB_{t,i}  = 0$, $dB_{t,i} dB_{t,j} = \delta_{ij} dt$.

\section{Stochastic invariance of submanifolds}
\label{sec-invariance}

In this section, we derive an {\bf invariance criterion} for a submanifold denoted by $M$ of codimension $1$ of $\mathbb{R}^{n}$ which correspond to the zero set of a given function $F:\mathbb{R}^{n}\rightarrow \mathbb{R} $ of class $C^{2},$ i.e.
\begin{equation}
\label{manifold}
M=\lbrace x\in \mathbb{R}^{n} \setminus F(x)=0 \rbrace ,
\end{equation}
under the flow of a stochastic differential equation in the It\^o sense. This result is by itself not new and many general results are known in particular a {\bf stochastic Naguno-Brezis} Theorem as proved by Aubin-Da Prato in \cite{aubin} or A. Milian in \cite{milian}. However, most of these results are difficult to read for a non-specialist in the field of stochastic calculus. The main interest of the following computations are precisely that our criterion can be {\bf easily derived} using {\bf basic results} in stochastic calculus. 

\subsection{Geometric definition of invariance}

We consider an ordinary differential equation of the form 
\begin{equation} 
\label{ODE}
\left\{
\begin{array}{lll}
 \dot {x}_t & = & f (t,x_t ), \\
x(0) & = & x_0
\end{array}
\right.
\tag{ODE}
\end{equation}
where $f:\mathbb{R}^{+} \times \mathbb{R}^{n}\longrightarrow \mathbb{R}$ is a function of class $C^{1}$ and $x_0 \in \R^n$ is the initial condition.

\begin{definition}
A given submanifold $M \subset \mathbb{R}^{n}$ is said to be invariant under the flow of the differential equation \eqref{ODE} if for all $ x_{0}\in M ,$ the maximal solution $ x_{t}(x_{0})$ starting in $ x_{0} $ when $ t=0 $ satisfied $x_{t}(x_{0})\in M$ for all $ t\in \mathbb{R}^+.$
\end{definition}

We denote by $T_{x} M$ the tangent plane of $M$ at $x$, we can write the invariance condition as follows 
\begin{equation}
f(t,x)\in T_{x}M,\quad \text{ for all } (t,x)\in \mathbb{R}^+ \times M.
\end{equation}
As $M$ is of codimension 1, for all $x\in M $ we can define the normal vector $N(x) $ to the tangent hyperplane $ T_{x}M $ in $ x ,$ such that 
$$ T_{x} M =\lbrace y \in \mathbb{R}^{d}, y.N(x)=0\rbrace,$$   
then the invariance condition can be written as 
\begin{equation}
\label{IF}
N(x)\cdot f(t,x)=0, \quad \text{for all } (t,x)\in \mathbb{R}^+ \times M.
\end{equation}
When $M$ is of the form \eqref{manifold}, the normal vector to $M$ at $x$ is equal to $\nabla F(x).$ Then the invariance condition reads as 
\begin{equation}
\label{IFF}
\nabla F(x) \cdot f(t,x)=0,\quad \text{for all } (t,x)\in \mathbb{R}^+ \times M.
\tag{IF}
\end{equation}

In the stochastic case, the trajectories are continuous but nowhere differentiable. As a consequence the previous geometric condition can not be used. In the following we discuss two natural generalization of the notion of invariance in the stochastic setting.

\subsection{Strong stochastic invariance}

Let us consider a stochastic differential equation of the form \eqref{IE}. The stochastic character of the flow allows us to defined two natural notions of invariance.
 
\begin{definition}[Strong persistence]
A submanifold $M$ is invariant in the strong sense for the stochastic system \eqref{IE} if for every initial data $x_0\in M$ almost surely, the corresponding solution $x(t),$ satisfies 
$$ \mathbb{P}\lbrace F\left( x(t)\right)  =0, t \in [ t_0, +\infty )\rbrace =1,$$
i.e., the solution almost surely attains values within the submanifold $M.$
\end{definition}

A direct computation gives the following criterion for stochastic invariance:

\begin{theorem}[It\^o 's strong invariance]
\label{invarianceito}
Let $M$ be a submanifold defined by a function $F$ invariant under the deterministic flow associated to \eqref{DE}, i.e., 
$$  \nabla F(x)\cdot f (t,x)=0, \text{ for all } x\in M, t\geq 0$$
The submanifold $ M $ is strongly invariant under the flow of the stochastic system \eqref{IE}, if and only if, 
$$  \nabla F(x)\cdot \sigma (t,x)=0, \text{ for all } x\in M, t\geq 0$$
and 
\begin{equation}
\label{sc}
\sum_{i,j}\frac{\partial^{2}F}{\partial x_{i}\partial x_{j}}(x_{t})\sum^{k}_{l=1}\sigma_{i,l}(t,x_{t})\sigma_{j,l}(t,x_{t}) =0.
\end{equation} 
\end{theorem}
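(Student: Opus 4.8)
The plan is to apply the multi-dimensional It\^o formula to the function $F$ along a solution $X_t$ of \eqref{IE}, and then to read off the invariance condition by requiring that $F(X_t)$ stays at zero. First I would write the differential $dF(t,X_t)$ using the formula recalled in Section \ref{remindstoc}: since $F$ does not depend on $t$, the $\partial F/\partial t$ term drops, and substituting $d\mathbf{X}_t = f(t,X_t)\,dt + \sigma(t,X_t)\,d\mathbf{B}_t$ into the first-order term $(\nabla_\mathbf{X}^\mathsf{T} F)\,d\mathbf{X}_t$ and into the second-order term $\tfrac12 (d\mathbf{X}_t^\mathsf{T})(\nabla_\mathbf{X}^2 F)\,d\mathbf{X}_t$, and using the It\^o multiplication rules $dt\,dt = 0$, $dt\,dB_{t,i} = 0$, $dB_{t,i}\,dB_{t,j} = \delta_{ij}\,dt$, one obtains
\begin{equation}
dF(X_t) = \Bigl[ \nabla F(X_t)\cdot f(t,X_t) + \tfrac12 \sum_{i,j}\tfrac{\partial^2 F}{\partial x_i \partial x_j}(X_t)\sum_{l=1}^{k}\sigma_{i,l}(t,X_t)\sigma_{j,l}(t,X_t)\Bigr] dt + \nabla F(X_t)\cdot\bigl(\sigma(t,X_t)\,d\mathbf{B}_t\bigr).
\end{equation}
So $F(X_t)$ is itself an It\^o process whose drift is the bracketed quantity and whose diffusion coefficient is the row vector $\nabla F(X_t)^\mathsf{T}\sigma(t,X_t)$.

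Next I would argue the equivalence. For the sufficiency direction, suppose $x_0 \in M$, so $F(x_0)=0$, and suppose both stated conditions hold on $M$. I claim the process $Y_t := F(X_t)$ is identically zero a.s. The natural way to see this is to note that, as long as $X_t$ stays on $M$, the deterministic invariance hypothesis $\nabla F\cdot f = 0$ kills the first part of the drift, condition \eqref{sc} kills the second part, and $\nabla F\cdot\sigma = 0$ kills the diffusion term, so $dY_t = 0$; hence $Y_t$ never leaves $0$ and $X_t$ never leaves $M$. To make this rigorous one introduces the stopping time $\tau = \inf\{t : X_t \notin M\}$ and shows $Y_{t\wedge\tau}$ has zero drift and zero diffusion, so $Y_{t\wedge\tau}\equiv 0$; continuity of paths then forces $\tau = +\infty$ a.s. For the necessity direction, suppose $M$ is strongly invariant, i.e. $Y_t = F(X_t) = 0$ for all $t$ a.s. Then the quadratic variation of $Y$ vanishes, which forces the diffusion coefficient $\nabla F(X_t)^\mathsf{T}\sigma(t,X_t)$ to be zero for a.e. $t$ a.s.; by continuity and because $X_t$ ranges over a full neighborhood of points of $M$ (varying the initial condition $x_0 \in M$), this yields $\nabla F(x)\cdot\sigma(t,x) = 0$ for all $x\in M$, $t\ge 0$. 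Once the diffusion term is gone, $dY_t$ reduces to its drift; since $Y_t\equiv 0$ the drift must vanish too, and combined with the deterministic hypothesis $\nabla F\cdot f = 0$ this leaves exactly \eqref{sc}.

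The main obstacle is the necessity direction: passing from "the diffusion coefficient of $F(X_t)$ vanishes for the solution started at a particular $x_0$" to the pointwise identity "$\nabla F(x)\cdot\sigma(t,x) = 0$ for all $x \in M$" requires knowing that solutions started from the various points of $M$ actually sweep out all of $M$ (and at all times), which is where one uses that $M$ is invariant so that $X_t(x_0) \in M$, together with continuity in the initial condition; stated at the level of generality of the theorem this is a density/continuity argument that deserves care, though in the intended reading it is immediate since one simply evaluates at $t=0$, $X_0 = x_0$ for arbitrary $x_0 \in M$. The rest is the routine It\^o computation above and the elementary fact that an It\^o process with zero drift and zero quadratic variation is constant.
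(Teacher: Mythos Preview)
Your proposal is correct and follows essentially the same route as the paper: apply the multidimensional It\^o formula to $F(X_t)$, use the deterministic invariance hypothesis to kill $\nabla F\cdot f$, and then require separately that the martingale part $\nabla F\cdot\sigma$ and the remaining drift term vanish. The paper's proof is considerably more informal---it simply asserts that the stochastic contribution vanishes ``if and only if'' $\nabla F\cdot\sigma=0$ and then reads off \eqref{sc}---whereas you supply the stopping-time argument for sufficiency and the quadratic-variation argument for necessity, and you correctly flag the density/continuity issue in passing from a single trajectory to all of $M$; these are genuine improvements in rigor but not a different strategy.
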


\begin{proof}
The essential tool in this case is the It\^o formula that will help us to formulate the invariance condition. Indeed, a process $x_{t}$ leaves the submanifold $M$ invariant if and only if for all initial condition $x_{0}\in M$ a.s, the stochastic process associated to $x_{t}$ satisfies $F(x_{t})=0$ for all $t$ almost surely where it is defined.\\

The multidimensional It\^o formula reads as
$$d[F(x_{t})]=\nabla F(x_{t})dx_{t}+\frac{1}{2}\sum_{i,j}\frac{\partial^{2}F}{\partial x_{i}\partial x_{j}}(x_{t})dx_{i}(t)dx_{j}(t).$$
So we obtain
$$d[F(x_{t}]=\nabla F(x_{t})f(t,x_t)dt + \nabla F(x_{t})\sigma(t,x_{t})dW_t+\frac{1}{2}\sum_{i,j}\frac{\partial^{2}F}{\partial x_{i}\partial x_{j}}(x_{t})\sum^{k}_{l=1}\sigma_{i,l}(t,x_{t})\sigma_{j,l}(t,x_{t})dt.$$ 
The gradient of $F$  being always normal to the tangent space of $M$, we have $\nabla F(x_{t})\cdot f (t,x_{t})=0$ since the manifold $M$ is assumed to be invariant in the deterministic case. It remains
\begin{equation}
\label{ifF}
 d[F(x_{t}]=\nabla F(x_{t})\sigma(t,x_{t})dW_{t}+\frac{1}{2}\sum_{i,j}\frac{\partial^{2}F}{\partial x_{i}\partial x_{j}}(x_{t})\sum^{k}_{l=1}\sigma_{i,l}(t,x_{t})\sigma_{j,l}(t,x_{t})dt.
 \end{equation}
The only contribution to the stochastic part is given by $ \nabla F(x_{t})\sigma(t,x_{t}) $ and is equal to zero if and only if the perturbation $\sigma$ satisfies the invariance condition \eqref{IFF}. Then the previous expression reduces to:
\begin{equation}
\label{itoderiva}
d[F(x_{t}]= \frac{1}{2}\sum_{i,j}\frac{\partial^{2}F}{\partial x_{i}\partial x_{j}}(x_{t})\sum^{k}_{l=1}\sigma_{i,l}(t,x_{t})\sigma_{j,l}(t,x_{t})dt.
\end{equation}
that give us the third conditional.\\

If we assume that the stochastic perturbation take the simplest case, where $\sigma_{i,j}=\delta_{i}^{j},$ we get the condition 
$$ \sum_{i=1}^{d}[\frac{\partial^{2}F}{\partial x_{i}^2}(x_{t})\sigma_{i,i}(t,x_{t})]^{2}=0,\;\forall(t,x)\in \mathbb{R}^+\times M. $$
\end{proof}

The previous Theorem indicates that unless a very specific form for $\sigma$ and $F$, there is no hope to recover invariance of a given manifold using a direct stochastic perturbation of a deterministic equation in the It\^o case.\\

As an example, we can specialize this result in the case of the sphere $S^2$ which will be important to study the invariance property of Landau-Lifshitz equation.

\begin{corollary}
\label{spherei}
The sphere $ S$ is invariant under the flow of the stochastic system \eqref{IE} if and only if the stochastic perturbation is null on the sphere i.e.,
$$\sigma_{i,i}(t,x)=0, \quad i=1,...,d\text{ for all }\ t\in \mathbb{R}^+\text{ and }\ x\in S^{d-1} .$$
\end{corollary}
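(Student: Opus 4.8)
The plan is to apply Theorem~\ref{invarianceito} to the explicit defining function $F(x) = \sum_{i=1}^{d} x_i^2 - 1$, for which $S^{d-1} = \{x\in\mathbb{R}^d : F(x) = 0\}$. As in the statement of that theorem, I take for granted that the deterministic drift $f$ already leaves the sphere invariant, i.e.\ $\nabla F(x)\cdot f(t,x) = 0$ on $S^{d-1}$; this is part of the standing hypothesis and is simply not repeated in the corollary. First I would record the two elementary ingredients needed: $\nabla F(x) = 2x$ and $\partial^2 F/\partial x_i\partial x_j = 2\delta_{ij}$, so the Hessian of $F$ is $2$ times the identity matrix.

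The heart of the proof is then to substitute this Hessian into condition~\eqref{sc}. Since $\partial^2 F/\partial x_i\partial x_j = 2\delta_{ij}$, the double sum in \eqref{sc} collapses to the diagonal and the condition becomes
\[
2\sum_{i=1}^{d}\sum_{l=1}^{k}\sigma_{i,l}(t,x)^2 = 0, \qquad x\in S^{d-1},\ t\geq 0 .
\]
This is the key point: the Hessian of $F$ is positive definite, so the left-hand side is a nonnegative multiple of a sum of squares of real numbers and can vanish only if every $\sigma_{i,l}(t,x) = 0$ on $S^{d-1}$. In the diagonal situation $\sigma_{i,j} = \sigma_{i,i}\delta_i^j$ this reduces exactly to the asserted condition $\sigma_{i,i}(t,x) = 0$, $i = 1,\ldots,d$. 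Moreover, once $\sigma$ vanishes on $S^{d-1}$, the other requirement $\nabla F(x)\cdot\sigma(t,x) = 0$ of Theorem~\ref{invarianceito} holds automatically, so it adds no new constraint.

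For the converse direction I would just note that if $\sigma_{i,i}(t,x) = 0$ on $S^{d-1}$ for every $i$ and $t$, then both $\nabla F(x)\cdot\sigma(t,x) = 0$ and \eqref{sc} are satisfied trivially there, so Theorem~\ref{invarianceito}, together with the assumed deterministic invariance, gives strong stochastic invariance of $S^{d-1}$. I do not expect any real obstacle: the entire content is the observation that the sphere's defining function has a definite Hessian, which degenerates the quadratic constraint \eqref{sc} precisely at $\sigma = 0$. The only point requiring a little care is to state explicitly the standing assumption of deterministic invariance, which the corollary leaves implicit.
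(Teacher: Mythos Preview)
Your proposal is correct and follows exactly the same route as the paper: apply Theorem~\ref{invarianceito} with $F(x)=\sum_i x_i^2$ (the paper omits the $-1$, but this is immaterial since only $\nabla F$ and the Hessian enter), observe that the Hessian is $2\delta_{ij}$, and conclude that condition~\eqref{sc} collapses to a sum of squares of the diagonal entries of $\sigma$, forcing $\sigma_{i,i}=0$ on the sphere. You are simply more explicit than the paper about the diagonal assumption on $\sigma$, the converse direction, and the standing hypothesis of deterministic invariance.
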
 

\begin{proof}
The proof follows from the fact that $F(x)=\di\sum_{i=1}^d x_i^2$ so that condition \ref{sc} reduces to
\begin{equation}
\sum_{i=1}^{d}[\sigma_{i,i}(t,x_{t})]^{2}=0,\;\forall(t,x)\in \mathbb{R}^+\times S .
\end{equation}
This concludes the proof.
\end{proof}

As a consequence, trying to impose the invariance of $S^2$ in the It\^o case "kill" the perturbation that is intended to be produced by the diffusion term.

\section{The stochastic invariantization method} 
\label{sectioninvariantizationmethod}

In this section, we develop a procedure to restore invariance of manifold following a procedure initiated by Etor\'{e} and al \cite{Etore} in a particular case. The basic idea is that in some cases, it is possible to project a flow which does not leave the manifold invariant on the manifold.

\subsection{First idea: projection procedure}

Consider submanifolds of codimention 1 of $\R^n,$ that is defined by a homogeneous function of degree $q\in \mathbb{N} ; F: \R^n\longrightarrow \R$ of class $C^2$, i.e.,
$$M=\lbrace x\in \R^n / F(x)=1\rbrace \text{ and } F(\lambda x)=\lambda^q F(x), \text{ for all } x\in \R^n, \lambda \in \R^+.$$
Let us assume that the coefficients of the system \eqref{IE} satisfy the invariance condition \eqref{IFF}, i.e.,
 $$ \nabla F(x)\cdot f (t,x)=\nabla F(x)\cdot \sigma(t,x)=0, \text{ for all } x\in M, t\geq 0,$$
where $\sigma$ is a vector of $\R^n$ and $W_t$ is a scalaire Brownian motion, and assume that 

$$  \sum_{i,j}^{n} \frac{\partial^{2}F}{\partial x_{i} \partial x_{j}}(x_t)\sigma_{i}(t,x_{t})\sigma_{j}(t,x_{t})dt \neq 0. $$ 
 Then, by Theorem \eqref{invarianceito} we know that $M$ is not invariant under the flow of \eqref{IE}.\\
 
 A very simple way to construct an invariant stochastic process is to project on the manifold. In general, a projection on a manifold is difficult to compute. In our case, it reduces simply to consider the stochastic process 
\begin{equation}
y_t=\frac{x_t}{F(x_t)^{\frac{1}{q}}} ,
\end{equation}
called the {\bf ''projected'' process} associated to the stochastic process $x_t$ and the function $F$.\\

Although simple, the previous method is in general not interesting. Indeed, the projected process satisfies in general a very complicated equation.
 
\begin{theorem}
Let assume that $x_t$ is the solution of It\^{o} equation\eqref{IE} and $M$ is defined by $F$ of class $C^2$ as above. The projected process $y_t$ satisifies the equation 

$$ \begin{aligned}
dy_k=  & \left[ F(x)^{-\frac{1}{q}}f_k(t,x) -\frac{1}{2q}\frac{\partial F}{\partial x_k} (x) .F(x)^{-\frac{1+q}{q}}\sigma_k^2 (t,x) \right] dt  \\ 
 & + \frac{1}{2}\di\sum_{i,j}^n \frac{-1}{q} x_k \sigma_i (t,x)\sigma_j (t,x)\left[ F(x)^{-\frac{1+q}{q}}\frac{\partial^2 F}{\partial x_i \partial x_j}(x) -\frac{1+q}{q}F(x)^{-\frac{1+2q}{q}}\frac{\partial F}{\partial x_i}(x)\frac{\partial F}{\partial x_j} (x) \right] dt \\
 &  + \left[ F(x)^{-\frac{1}{q}}\sigma_k(t,x) \right]  dW_t;
 \quad \text{ for each } k=1,...,n.
 \end{aligned}
$$
\end{theorem}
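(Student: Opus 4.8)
The strategy is a direct application of the multi-dimensional It\^o formula to the function $y_k = x_k \cdot F(x)^{-1/q}$, treating it as a $C^2$ function of the It\^o process $x_t$ (which is legitimate away from the zero set of $F$, and near $M$ we have $F$ close to $1$). First I would introduce the shorthand $G(x) = F(x)^{-1/q}$ and compute the partial derivatives of $\Phi_k(x) := x_k G(x)$: namely $\partial \Phi_k/\partial x_i = \delta_{ki} G(x) + x_k\, \partial G/\partial x_i$, and the Hessian entries $\partial^2 \Phi_k/\partial x_i \partial x_j = \delta_{ki}\,\partial G/\partial x_j + \delta_{kj}\,\partial G/\partial x_i + x_k\, \partial^2 G/\partial x_i \partial x_j$. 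The needed derivatives of $G$ follow from the chain rule: $\partial G/\partial x_i = -\tfrac{1}{q} F(x)^{-(1+q)/q}\,\partial F/\partial x_i$, and $\partial^2 G/\partial x_i \partial x_j = -\tfrac{1}{q} F(x)^{-(1+q)/q}\,\partial^2 F/\partial x_i\partial x_j + \tfrac{1+q}{q^2} F(x)^{-(1+2q)/q}\,\partial F/\partial x_i\,\partial F/\partial x_j$.

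Next I would substitute $d x_k = f_k\,dt + \sigma_k\,dW_t$ into the It\^o formula
$$ dy_k = \sum_i \frac{\partial \Phi_k}{\partial x_i}\, dx_i + \frac{1}{2}\sum_{i,j} \frac{\partial^2 \Phi_k}{\partial x_i \partial x_j}\, dx_i\, dx_j, $$
using the It\^o rules $dt\,dt = dt\,dW_t = 0$ and $dW_t\,dW_t = dt$, so that $dx_i\, dx_j = \sigma_i \sigma_j\, dt$. The crucial simplification is that the invariance hypotheses $\nabla F(x)\cdot f(t,x) = 0$ and $\nabla F(x)\cdot \sigma(t,x) = 0$ on $M$ kill every term containing the factor $\sum_i \partial F/\partial x_i\, f_i$ or $\sum_i \partial F/\partial x_i\, \sigma_i$. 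Concretely, in the first-order part the term $x_k\, \partial G/\partial x_i$ contracted against $f_i$ (in the $dt$ part) and against $\sigma_i$ (in the $dW_t$ part) both vanish, leaving $G f_k\, dt + G \sigma_k\, dW_t$; and in the second-order part all Hessian terms of the form $\delta_{ki}\,\partial G/\partial x_j\,\sigma_i\sigma_j$ collapse using $\delta_{ki}$ to $\partial G/\partial x_j\, \sigma_k \sigma_j$, but then further contraction with $\sigma_j$ against $\partial F/\partial x_j$ again vanishes — \emph{except} the piece that survives because one of the two $\sigma$'s is attached to the free index $k$ via the Kronecker delta and the other via $\partial G/\partial x_k \propto \partial F/\partial x_k$: this produces exactly the term $-\tfrac{1}{2q}\,\partial F/\partial x_k\, F^{-(1+q)/q}\,\sigma_k^2\, dt$ (the $\delta_{ki}$ and $\delta_{kj}$ contributions each give half of this). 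Finally the genuinely second-order term $x_k\, \partial^2 G/\partial x_i\partial x_j\, \sigma_i \sigma_j$ gives the remaining $dt$-sum, and here one should keep the full Hessian of $G$ because $\sum_{i,j}\partial^2 F/\partial x_i\partial x_j\, \sigma_i\sigma_j$ is assumed nonzero — that is precisely the obstruction the whole section is about.

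Collecting the three groups of terms and substituting the explicit formulas for $\partial G/\partial x_k$ and $\partial^2 G/\partial x_i\partial x_j$ yields the claimed expression for $dy_k$; matching signs and exponents is then a routine bookkeeping check. The main obstacle is purely organizational: one must carefully track which of the many $\delta$-contractions survive the invariance relations and which do not, since a careless accounting easily produces spurious terms or the wrong numerical coefficient in front of $\partial F/\partial x_k\, \sigma_k^2$. A secondary technical point worth a remark is the justification of applying It\^o's formula to $\Phi_k$, which is only $C^2$ on $\{F \neq 0\}$; since the hypotheses guarantee the process starts on $M$ where $F = 1$ and the invariance of the sign of $F$ can be argued as in the proof of Theorem~\ref{invarianceito}, the formula is valid along the trajectory.
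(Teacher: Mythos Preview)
The paper does not actually supply a proof of this theorem; it states the formula and immediately passes to the corollary for the sphere. So there is no argument to compare against, and your overall strategy --- apply the multi-dimensional It\^o formula to $\Phi_k(x)=x_k\,F(x)^{-1/q}$, compute the first and second partials via $G=F^{-1/q}$, and simplify using the tangency conditions $\nabla F\cdot f=0$ and $\nabla F\cdot\sigma=0$ --- is exactly the natural (and presumably intended) one.

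There is, however, a genuine gap in your treatment of the cross terms. From the Hessian of $\Phi_k$ the two Kronecker pieces contribute
\[
\frac{1}{2}\sum_{i,j}\bigl(\delta_{ki}\,\partial_jG+\delta_{kj}\,\partial_iG\bigr)\sigma_i\sigma_j
=\sigma_k\sum_j\partial_jG\,\sigma_j
=-\frac{1}{q}F^{-(1+q)/q}\,\sigma_k\,(\nabla F\cdot\sigma).
\]
If the invariance relation $\nabla F\cdot\sigma=0$ holds at the point $x_t$, this expression is identically zero: the sum $\sum_j(\partial_jF)\sigma_j$ vanishes as a whole, and one cannot single out the summand $j=k$ and declare that it ``survives.'' Your sentence ``further contraction with $\sigma_j$ against $\partial F/\partial x_j$ again vanishes --- \emph{except} the piece that survives\ldots'' is therefore self-contradictory, and the coefficient $-\tfrac{1}{2q}\,\partial_kF\,F^{-(1+q)/q}\sigma_k^2$ cannot be produced by this mechanism. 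Carried out correctly, your computation yields $\sigma_k(\nabla F\cdot\sigma)$ at that spot, not $(\partial_kF)\sigma_k^2$; these coincide only in the trivial case where $\sigma$ has a single nonzero component.

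A secondary point you should address explicitly: the tangency relations are assumed only on $M=\{F=1\}$, whereas the whole premise of this section is that $x_t$ leaves $M$. So you are not entitled to invoke $\nabla F(x_t)\cdot f(t,x_t)=0$ or $\nabla F(x_t)\cdot\sigma(t,x_t)=0$ along the trajectory without an additional hypothesis (e.g.\ that $f$ and $\sigma$ are tangent to every level set of $F$, which homogeneity alone does not guarantee). Either state this as an extra assumption or keep the terms $x_k\,\partial_iG\,f_i$ and $x_k\,\partial_iG\,\sigma_i$ in the final formula. In short: the method is right, but the bookkeeping that is supposed to reproduce the displayed term $-\tfrac{1}{2q}\,\partial_kF\,\sigma_k^2$ does not stand, and the stated formula itself may contain a misprint at that spot.
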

 
Even for a simple manifold as a sphere, the corresponding equation does not simplify.
 
\begin{corollary}
Let assume that $x_t$ is the solution of It\^{o} equation\eqref{IE}. The projected process $y_t=\frac{x_t}{F(x_t)^{\frac{1}{2}}} $ on the sphere $S^{n-1}$ satisfies the equation 
\begin{equation}
\begin{aligned}
dy_k= & \left[ F(x)^{-\frac{1}{2}}f_k(t,x) -\frac{1}{2} x_k F(x)^{-\frac{3}{2}} \sigma_k^2(t,x)+ \frac{3}{2}  x_k  \sum_{i,j}^n \sigma_i (t,x)\sigma_j (t,x) F(x)^{-\frac{5}{2}}x_ix_j \right]dt \\
&+   \left[  F(x)^{-\frac{1}{2}}\sigma_k(t,x) \right]dW_t;  \quad \text{ for each } k=1,...,n.
\end{aligned}
\end{equation}
\end{corollary}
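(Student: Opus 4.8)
This is the case $q=2$ of the preceding Theorem, so the plan is simply to substitute the defining data of the sphere into its general formula. First I would record that $S^{n-1}=\{x\in\R^{n}:\ F(x)=1\}$ for $F(x)=\sum_{i=1}^{n}x_i^{2}$, that this $F$ is smooth and homogeneous of degree $q=2$ so the hypotheses of the Theorem apply, and that $\partial F/\partial x_k=2x_k$ while $\partial^{2}F/\partial x_i\partial x_j=2\delta_{ij}$.

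Then I would carry out the substitution $q=2$ term by term. The exponents $-1/q$, $-(1+q)/q$, $-(1+2q)/q$ become $-1/2$, $-3/2$, $-5/2$; the drift correction $-\frac{1}{2q}\frac{\partial F}{\partial x_k}F(x)^{-(1+q)/q}\sigma_k^{2}$ becomes $-\frac12 x_k F(x)^{-3/2}\sigma_k^{2}$; and the diffusion coefficient $F(x)^{-1/q}\sigma_k$ becomes $F(x)^{-1/2}\sigma_k$, unchanged in form. The only piece requiring a short computation is the double sum $\frac12\sum_{i,j}\frac{-1}{q}x_k\sigma_i\sigma_j\big[F(x)^{-(1+q)/q}\partial_i\partial_jF-\frac{1+q}{q}F(x)^{-(1+2q)/q}\partial_iF\,\partial_jF\big]$: inserting $\partial_i\partial_jF=2\delta_{ij}$ and $\partial_iF\,\partial_jF=4x_ix_j$, performing the contraction against the Kronecker delta, and collecting the rational constants together with the powers of $F(x)$ produces the displayed drift.

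Since all the analytic content already sits in the general Theorem, what is left here is pure bookkeeping --- keeping track of the exponents of $F(x)$ and of the coefficients $\frac12$, $\frac1q$, $\frac{1+q}{q}$ --- so I do not expect a genuine obstacle. As a cross-check I would also redo the computation directly, applying the multidimensional It\^o formula to $y_k=x_kF(x_t)^{-1/2}$ through the product rule $dy_k=F(x_t)^{-1/2}dx_k+x_k\,d\big(F(x_t)^{-1/2}\big)+dx_k\cdot d\big(F(x_t)^{-1/2}\big)$, expanding $d\big(F^{-1/2}\big)=-\frac12F^{-3/2}\,dF+\frac38F^{-5/2}(dF)^{2}$ with $dF=\nabla F\cdot dx_t+\frac12\sum_{i,j}\partial_i\partial_jF\,dx_i\,dx_j$, and using the standing invariance hypotheses $\nabla F\cdot f=\nabla F\cdot\sigma=0$ to discard the vanishing contributions. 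Comparing the two derivations is then the natural way to catch any slip in the numerical coefficients.
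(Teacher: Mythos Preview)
Your approach is exactly the paper's: the Corollary is stated without proof and is meant to follow by specializing the preceding Theorem to $F(x)=\sum_i x_i^{2}$, $q=2$, just as you describe. Your plan to cross-check via a direct application of the It\^o formula to $y_k=x_kF(x_t)^{-1/2}$ is prudent---when you actually perform the Kronecker contraction in the double sum you will see a contribution $-\tfrac12 x_kF(x)^{-3/2}\sum_i\sigma_i^{2}$ that you should track carefully against the displayed drift.
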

 
The previous expression has many problems:

\begin{itemize}
\item The resulting stochastic differential is far from being simple and can not in general be written only with respect to $y_t.$

\item The form of the deterministic part can not be seen as a perturbation of $f.$ This induces difficulties for the interpretation of the new equation.
\end{itemize}

In the following, we follow a different strategy initiated by P.Etore and al in \cite{Etore}.

\subsection{The invariantization method}

We first introduce the notion of invariatized process.

\begin{definition}[Invariantized process]
Let $x_t$ be a diffusion process defined by 
\begin{equation}
\label{eqx}
dx_t=f(t,x_t)dt+\sigma(t,x_t)ddW_t.
\end{equation}
The invariantized process associated to \eqref{eqx} and the submanifold $M$ defined by $F$ is defined by 
\begin{equation}
\left\lbrace \begin{array}{l}
dy_t=f(t,x_t)dt+ \sigma(t,x_t) dW_t\\
x_t=\frac{y_t}{\left( F(y_t)\right) ^{\frac{1}{q}}}\\
y_0=y\in M
\end{array}
\right.
\end{equation} 
\end{definition}

This terminology is justified by the fact that we have 
$$ F(x_t)=1, \text{ for all } t\geq 0.$$

The method associating to a given process and a submanifold $M$ its invariantized process is called the {\bf invariantization} method.\\

The main property of the invariantization method is that the stochastic differential equation satisfied by $x_t$ is simple in the contrary to the projection method.

\begin{theorem}[Invariantization]
\label{invariantizationmethod}
Assume that $F$ is homogeneous of degree $q$ and that $F(y_t)$ is a non random process, i.e. that there exists a function $h(t)$ such that 
\begin{equation}
dF(y_t ) = h(t) dt .
\end{equation}
Then, denoting by $H(t)$ the function defined by
\begin{equation}
H(t)=1+\di\int_0^t h(s)\, ds ,
\end{equation}
the invariantized stochastic process associated to $y_t$ and $F$ satisfies the stochastic differential equation
\begin{equation}
dx_t=\left[ -\frac{1}{q}\frac{ \dot{H}(t)}{H(t)}x_t+\frac{1}{\left( H(t)\right)^{\frac{1}{q}} }f(t,x_t)\right]  dt+\frac{1}{\left( H(t)\right)^{\frac{1}{q}} }\sigma(t,x_t)dW_t 
\end{equation}
where $\dot{H}$ denotes the derivative of $H$ with respect to $t$.
\end{theorem}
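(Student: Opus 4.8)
The plan is to compute the stochastic differential of $x_t = y_t / (F(y_t))^{1/q}$ directly using the It\^o formula, exploiting the hypothesis that $F(y_t) = H(t)$ is a deterministic (non-random) process. The key simplification compared to the general projection theorem is precisely this: since $F(y_t)$ has no martingale part and equals the explicit function $H(t)$, the map $(t, y) \mapsto y/(H(t))^{1/q}$ can be treated as a time-dependent affine rescaling of $y_t$, so no second-order terms involving $F$ appear.

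First I would write $x_{t} = g(t, y_t)$ where $g_k(t,y) = y_k \, (H(t))^{-1/q}$, treating $H(t)$ as a known deterministic function. Applying the multi-dimensional It\^o formula to each component $g_k$: the Hessian of $g_k$ in the $y$-variables vanishes identically (it is linear in $y$), so the second-order It\^o correction is zero. This leaves
\[
dx_{t,k} = \frac{\partial g_k}{\partial t}(t, y_t)\, dt + \sum_j \frac{\partial g_k}{\partial y_j}(t, y_t)\, dy_{t,j}.
\]
Here $\partial g_k/\partial y_j = \delta_{kj} (H(t))^{-1/q}$ and $\partial g_k/\partial t = -\frac{1}{q} y_k\, (H(t))^{-1/q - 1}\dot H(t) = -\frac{1}{q}\frac{\dot H(t)}{H(t)} y_k (H(t))^{-1/q}$. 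Next I would substitute $dy_{t,j} = f_j(t, y_t)\, dt + \sigma_j(t, y_t)\, dW_t$ and then replace every occurrence of $y_t$ by $x_t$ using the defining relation $y_t = (H(t))^{1/q} x_t$ together with the homogeneity of degree $q$ of $F$ (which gives $F(y_t) = H(t)$ consistently, and also lets one rewrite $f(t, y_t)$ and $\sigma(t, y_t)$ — here one implicitly uses that $f$ and $\sigma$ are evaluated along the rescaled argument). Collecting the $dt$ terms yields $-\frac{1}{q}\frac{\dot H(t)}{H(t)} x_{t,k}\, dt + (H(t))^{-1/q} f_k(t, x_t)\, dt$ and the $dW_t$ term yields $(H(t))^{-1/q}\sigma_k(t, x_t)\, dW_t$, which is exactly the claimed equation.

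The main obstacle is a matter of bookkeeping rather than depth: one must be careful about what it means to ``replace $y_t$ by $x_t$'' inside $f(t, y_t)$ and $\sigma(t, y_t)$, since in the definition of the invariantized process these coefficients are genuinely evaluated at $x_t$ (the drift and diffusion in the $y$-equation are $f(t, x_t)$ and $\sigma(t, x_t)$, not $f(t, y_t)$), so in fact the substitution is immediate and one should simply carry $f(t,x_t)$, $\sigma(t,x_t)$ through from the start. The only genuine use of the hypotheses is: (i) $F(y_t) = H(t)$ is deterministic, so differentiating $(F(y_t))^{-1/q} = (H(t))^{-1/q}$ in time contributes only a $dt$ term and no bracket term $d\langle \cdot\rangle$; and (ii) homogeneity of degree $q$ guarantees $F(x_t) = F(y_t)/H(t) = 1$, justifying the terminology and confirming consistency, though it is not strictly needed for the computation of the SDE itself. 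I would close by remarking that the well-definedness of $H(t)^{1/q}$ requires $H(t) > 0$, which holds at least for small $t$ by continuity since $H(0) = 1$.
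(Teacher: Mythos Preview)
Your proposal is correct and follows essentially the same route as the paper: both differentiate $x_t = y_t / H(t)^{1/q}$ using that $H(t)$ is deterministic (so only an ordinary $t$-derivative appears for the scaling factor), substitute the expression for $dy_t$, and rewrite $y_t$ in terms of $x_t$. Your version is simply more explicit about why no second-order It\^o correction arises and about the fact that $f$ and $\sigma$ are already evaluated at $x_t$ in the defining equation for $dy_t$, points the paper's short computation leaves implicit.
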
 

\begin{proof}
This is a simple computation. As $F(y_t)=H(t)$, the stochastic process $x_t =y_t / H(t)^{1/q}$ satisfies 
\begin{equation}
dx_t = -\frac{1}{q} \di\frac{\dot{H} (t)}{H(t)^{1/q +1}} y_t +\di\frac{1}{H(t)^{1/q}} dy_t .
\end{equation} 
Replacing $y_t$ by $x_t H(t)$ and $dy_t$ by its expression, we obtain
$$\begin{array}{lll}
  d x_t & = & \left[ -\frac{1}{q}\frac{ \dot{H}(t)}{H(t)}x_t+\frac{1}{\left( H(t)\right)^{\frac{1}{q}} }f(t,x_t)\right]  dt\\
  & & +\frac{1}{\left( H(t)\right)^{\frac{1}{q}} }\sigma(t,x_t)dW_t .
\end{array} $$
This concludes the proof.
\end{proof}

\subsection{Example: It\^o version of the Kubo oscillator model}

Let us consider the Kubo oscillator (see for example \cite{mil}) in the It\^o case, which can be written as
 \begin{equation}
 \label{kuboi}
 dX_t=J_{a}X_t dt+J_{\sigma}X_t dW_t,
 \end{equation}
where $X=\begin{pmatrix} X_1\\ X_2 \end{pmatrix}\in \R^{2}, a, \sigma \in\R, W_t$ is a 1-dimensional Brownian motion and 
$$ J_{k}= \begin{pmatrix} 0 & -k\\ k & 0\end{pmatrix}, \;\forall k\in \R.$$
The Stratonovich version of the Kubo oscillator has any circle $X_1^2+ X_2^2=r^2,\; \forall r\in \R^+,$ invariant under the flow. However, the circles are not invariant under the flow of the It\^{o} version of the Kubo oscillator. Indeed, we have using the It\^{o} formula with $F(X_1,X_2)=X_1^2+ X_2^2,$ that 
\begin{equation}
dF((X_1,X_2)= \sigma^2( X_1^2+ X_2^2)dt.
\end{equation}
Assuming that $F(X_1,X_2)=r^2,\; r\neq 0$ is invariant under the flow gives 
\begin{equation}
dF((X_1,X_2)= \sigma^2 r^2 dt.
\end{equation}
As a consequence, the condition $dF=0$ is satisfied if and only if $\sigma=0$. This means that in the It\^{o} case, invariance can not be preserved while the flow is stochastic (i.e., $\sigma \neq 0).$\\

If we apply the last transformation such that $X_t=\frac{Y_t}{\vert Y_t\vert}$ we find that $X_t$ is the solution of the stochastic system
\begin{equation}
 dX_t=\tilde{J}_{a,\sigma,t} \; X_t dt+  \frac{1}{\sqrt{\sigma^2 t+1}}J_{\sigma}X_t dW_t,
 \end{equation}
 where
 $$ \tilde{J}_{a,\sigma,t}= \begin{pmatrix} -\frac{\sigma^2}{2(\sigma^2 t+1)} & - \frac{a}{\sqrt{\sigma^2 t+1}}\\ \frac{a}{\sqrt{\sigma^2 t+1}} & -\frac{\sigma^2}{2(\sigma^2 t+1)} \end{pmatrix},$$
 which preserve the invariance of $S$ under the flow of the deterministic equation 
 \begin{equation}
 dX_t=J_{a}X_t dt, \quad \text{ for all }a \in \R.
 \end{equation}

\section{Application: a stochastic Landau-Lifshitz equations}
\label{landau}

In this Section, we derive a stochastic Landau-Lifshitz equation in the It\^o setting. We first remind the construction of the classical Landau-Lifshitz equation and then its main properties. We then review classical stochastic approach used by different authors and the difficulties associated with these models. 

\subsection{The Landau-Lifshitz equation}

The Landau-Lifshitz equation is a generalization of the classical Larmor equation. The Larmor equation is {\bf conservative}. However, {\bf dissipative processes} take place within dynamic {\bf magnetization processes}. The microscopic nature of this dissipation is still not clear and is currently the focus of considerable research \cite{arrott,bertotti2}. The approach followed by Landau and Lifshitz consists of introducing dissipation in a phenomenological way. They introduce an additional torque term that pushes magnetization in the direction of the effective field. The Landau-Lifshitz equation becomes
\begin{equation}
\label{LLg}
\frac{d\mu}{dt}=-\mu\wedge b -\alpha\mu \wedge(\mu \wedge b),
\tag{LLg}
\end{equation} 
where $\mu \in \R^3$ is the single magnetic moment, $\wedge$ is the vector cross product in $\mathbb{R}^3$, $b$ is the effective field and $\alpha >0$ is the damping effects.\\

As for the Larmor equation, this equation possess many particular properties which can be used to derive a stochastic analogue. We review some of them in the next Section.
 
\subsection{Properties of the Landau-Lifshitz equation}

In this Section, we give a {\bf self-contained} presentation of some classical features of the LL equation. Readers which are familiar with the LL equation can switch this Section.  

\subsubsection{Invariance}

The following result is fundamental is all the stochastic generalization of the LL equation.

\begin{lemma}
\label{invarianceLL}
Let $\mu (0) \in S^2$, then the solution $\mu_t$ satisfies for all $t\in \R$, $\parallel\mu_t \parallel =1$, i.e. the sphere $S^2$ is invariant under the flow of the LL equation.
\end{lemma}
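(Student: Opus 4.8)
The plan is to compute $\frac{d}{dt}\norm{\mu_t}^2$ along solutions of \eqref{LLg} and show it vanishes, which is exactly the statement that $F(\mu) = \norm{\mu}^2$ is a conserved quantity and hence its level set $S^2 = \{F = 1\}$ is invariant. Since \eqref{LLg} is a deterministic ODE with a $C^1$ (indeed smooth, polynomial) right-hand side, this is the classical situation: by the geometric invariance condition \eqref{IFF} it suffices to check that $\nabla F(\mu) \cdot f(\mu) = 0$ for all $\mu \in S^2$, where $f(\mu) = -\mu\wedge b - \alpha\,\mu\wedge(\mu\wedge b)$ is the vector field.

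Concretely, I would write $\nabla F(\mu) = 2\mu$ and compute
\[
\frac{1}{2}\frac{d}{dt}\norm{\mu_t}^2 = \mu_t \cdot \dot\mu_t = \mu_t\cdot\bigl(-\mu_t\wedge b\bigr) - \alpha\,\mu_t\cdot\bigl(\mu_t\wedge(\mu_t\wedge b)\bigr).
\]
The first term vanishes because $\mu_t\wedge b$ is orthogonal to $\mu_t$ (the cross product of two vectors is perpendicular to each factor), i.e. $\mu\cdot(\mu\wedge b) = \det(\mu,\mu,b) = 0$. The second term vanishes for the same reason: $\mu_t\wedge(\mu_t\wedge b)$ is a cross product having $\mu_t$ as one factor, hence orthogonal to $\mu_t$. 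Therefore $\frac{d}{dt}\norm{\mu_t}^2 = 0$ for all $t$, so $\norm{\mu_t}^2 = \norm{\mu(0)}^2 = 1$ whenever $\mu(0)\in S^2$. Existence and uniqueness of the maximal solution, needed to speak of ``the solution $\mu_t$'', follow from Cauchy--Lipschitz since $f$ is locally Lipschitz; and the conservation of $\norm{\mu_t}$ keeps the solution in the compact set $S^2$, so it is in fact global, consistent with the statement ``for all $t\in\R$''.

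There is essentially no obstacle here: the only thing to be careful about is the elementary vector-algebra identity $v\cdot(v\wedge w) = 0$, applied twice (once directly, once to the vector $v\wedge b$ in place of $w$), and the observation that both torque terms in \eqref{LLg} are of the form $\mu\wedge(\text{something})$, so both are automatically perpendicular to $\mu$. If one prefers a coordinate-free phrasing, one can invoke \eqref{IFF} directly with $F(\mu)=\sum_i \mu_i^2$ and note that the deterministic invariance criterion is met; the stochastic machinery of Section~\ref{sec-invariance} is not needed for this purely deterministic lemma, though it sets up the framework for the stochastic Landau--Lifshitz equation derived afterwards.
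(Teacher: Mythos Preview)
Your proof is correct and follows essentially the same approach as the paper's own proof: differentiate $\norm{\mu_t}^2$, substitute the right-hand side of \eqref{LLg}, and use orthogonality of the cross product to its factors to conclude the derivative vanishes. Your additional remarks on Cauchy--Lipschitz and global existence via compactness are a welcome bit of extra rigor that the paper omits.
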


We give the proof for the convenience of the reader.

\begin{proof}
Let $\mu_t$ be a solution of the LL equation. We have 
\begin{equation}
\left .
\begin{array}{lll}
\di\frac{d}{dt} \left [ \mu_t . \mu_t \right ] & = & 2 \mu_t .\di \frac{d\mu_t }{dt} ,\\
 & = & \mu_t . \left [ -\mu_t \wedge b -\alpha\mu_t \wedge(\mu_t \wedge b) \right ] .
\end{array}
\right .
\end{equation}
By definition of the wedge product, the vectors $\mu_t\wedge b$ and  $\alpha\mu_t \wedge(\mu_t \wedge b)$ are orthogonal to $\mu_t$ so that 
\begin{equation} 
\di\frac{d}{dt} \left [ \mu_t . \mu_t \right ] =0 .
\end{equation}
As a consequence, using the fact that $\mu_0 \in S^2$, we deduce that 
\begin{equation}
\parallel \mu_t \parallel =\parallel \mu_0 \parallel =1 ,
\end{equation}
which concludes the proof.
\end{proof}

As a consequence, a solution starting on the sphere $S^2$ will remains always on it. The sphere being a two dimensional compact manifold, we can use classical result to deduce the asymptotic behavior of the solutions. But first, let us compute the equilibrium points.

\subsubsection{Equilibrium points}

The equilibrium points of the LL equation are easily obtained.

\begin{lemma}
The LL equation possesses as equilibrium points $b/\parallel b\parallel$ and $-b/\parallel b\parallel$. 
\end{lemma}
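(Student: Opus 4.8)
The plan is to compute the equilibria directly from \eqref{LLg}, that is, to find all $\mu$ for which the right-hand side vanishes: $-\mu\wedge b -\alpha\,\mu \wedge(\mu \wedge b) = 0$. First I would simplify the double cross product using the standard identity $\mu\wedge(\mu\wedge b) = (\mu\cdot b)\mu - (\mu\cdot\mu)b = (\mu\cdot b)\mu - \|\mu\|^2 b$. Since by Lemma \ref{invarianceLL} the physically relevant solutions live on $S^2$, where $\|\mu\|^2 = 1$, the equilibrium condition becomes
\begin{equation}
-\mu\wedge b -\alpha\bigl[(\mu\cdot b)\mu - b\bigr] = 0 .
\end{equation}

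Next I would argue that both $\mu\wedge b$ and the bracketed vector must vanish separately, because $\mu\wedge b$ is orthogonal to $b$ while $(\mu\cdot b)\mu - b$ lies in the plane spanned by $\mu$ and $b$; more simply, taking the cross product of the whole equation with $\mu$ (or the dot product with suitable vectors) decouples the two terms. The condition $\mu\wedge b = 0$ forces $\mu$ to be parallel to $b$, say $\mu = \lambda b$ for some scalar $\lambda$. Imposing $\mu\in S^2$ then gives $\lambda = \pm 1/\|b\|$, hence $\mu = \pm b/\|b\|$. Finally I would check that these two candidates indeed kill the damping term as well: if $\mu = \pm b/\|b\|$ then $\mu\cdot b = \pm\|b\|$, so $(\mu\cdot b)\mu - b = (\pm\|b\|)(\pm b/\|b\|) - b = b - b = 0$, confirming that both points are genuine equilibria.

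There is essentially no hard part here; the only thing to be slightly careful about is justifying that the two torque terms vanish independently rather than merely canceling each other. This is where I would want to be explicit: since $-\mu\wedge b$ is perpendicular to both $\mu$ and $b$ whereas $-\alpha[(\mu\cdot b)\mu - b]$ lies in $\operatorname{span}\{\mu,b\}$, their sum can be zero only if each is zero (unless $\mu$ and $b$ are parallel, in which case $\mu\wedge b=0$ automatically and one is reduced to the second term directly). Either way one arrives at $\mu\parallel b$, and then normalization on $S^2$ pins down exactly the two points $b/\|b\|$ and $-b/\|b\|$, completing the proof.
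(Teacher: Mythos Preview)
Your proposal is correct and follows essentially the same route as the paper: both arguments rest on the observation that the precession term $-\mu\wedge b$ and the damping term $-\alpha\,\mu\wedge(\mu\wedge b)$ are mutually orthogonal, so that the equilibrium condition forces each to vanish separately, whence $\mu\parallel b$ and the normalization $\mu\in S^2$ gives $\mu=\pm b/\|b\|$. Your use of the BAC--CAB identity to expand the double cross product makes the orthogonality structure more explicit than the paper's phrasing, and your final check that the candidates actually kill the damping term is a welcome addition, but the core idea is the same.
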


We give the proof for the convenience of the reader.

\begin{proof}
An equilibrium point $\mu \in \R^3$ satisfies 
\begin{equation}
-\mu\wedge b -\alpha\mu \wedge(\mu \wedge b)=0,
\end{equation}
which gives 
\begin{equation}
-\mu\wedge b =\alpha\mu \wedge(\mu \wedge b) .
\end{equation}
The vector $\mu$ must be orthogonal to $\mu \wedge b$ and at the same time equal to $-\mu \wedge b$ up to a factor $\alpha >0$. As $\mu_0 \in S^2$, we have $\mu\not= 0$ and the only solution is 
\begin{equation}
\mu \wedge b =0 .
\end{equation}
We then obtain $\mu =\lambda b$, with $\lambda \in \R$. By Lemma \ref{invarianceLL}, we must have $\mu \in S^2$ so that $\lambda +\pm 1/\parallel b\parallel$. This concludes the proof.
\end{proof}

We see that the equilibrium point of the LL equation coincide with those of the Larmor equation.\\

The stability of the previous equilibrium point can be easily studied using the Lyapounov theory.

\subsection{Toward a stochastic Landau-Lifshitz equation}

In this Section, we discuss the usual way of deriving a stochastic analogue of the Landau-Lifshitz equation by considering an external perturbation of the effective magnetic field. We focus on the Stratonovich and the It\^o interpretation and we explain the strategy used in Etore and al. \cite{Etore} to bypass the obstruction that the It\^o version does not preserve the sphere $S^2$ using the invariantization method. 

\subsubsection{Classical approach to the stochastic Landau-Lifshitz equation}  

The main approach to deal with the stochastic behavior of the effective magnetic field is to assume that the effective field $b$ is subject to a stochastic perturbation $b+\epsilon "noise"$. Due to the linearity with respect to the parameter $b$, we obtain an equation of the form 
\begin{equation}
d \mu_t = [ -\mu_t \wedge b- \alpha \mu_{t}\wedge ( \mu_{t}\wedge b )] dt+ \varepsilon [ -\mu_{t}\wedge "noise"- \alpha \mu_t \wedge \mu_t \wedge "noise" ] .
\end{equation}
Interpreting the previous equation in the It\^o formalism of stochastic differential equation leads to the following stochastic model:
\begin{equation}
\label{ELL}
d \mu_t = [ -\mu_t \wedge b- \alpha \mu_{t}\wedge ( \mu_{t}\wedge b )] dt+ \varepsilon [ -\mu_{t}\wedge dW_{t}- \alpha \mu_t \wedge \mu_t \wedge dW_t ] ,
\tag{ELL}
\end{equation}
where the term $\sigma(t,x)= -x \wedge . -\alpha x\wedge(x\wedge .)$ can be written as
\begin{equation}
\label{sigmaetore}
\sigma(t,x)= 
\begin{pmatrix}
\alpha(x_3^2+x_2^2) &  x_3-\alpha x_1x_2 &-x_2 -\alpha x_3x_1\\
-x_3-\alpha x_1x_2 &\alpha (x_3^2+x_1^2 )&x_1 -\alpha x_3x_2 \\
x_2-\alpha x_1 x_3 & -x_1-\alpha x_3x_2 & \alpha(x_2^2+x_1^2) 
\end{pmatrix}
.
\end{equation}

Most of the authors use the Stratonovich formalism in order to give a sense to the previous equation. The main reason is that in this case, the invariance of $S^2$ is ensured. However, as pointed out by Etore and al. in \cite{Etore}, the Stratonovich version of the Landau-Lifshitz equation leads to several difficulties, such as the fact that the stability of the equilibrium points of the deterministic LL equation is lost. \\

The previous point has motivated the work \cite{Etore} in which the authors discuss the It\^o case. However, the It\^o approach lead to other difficulties. Details are given in the next Section.

\subsubsection{Stochastic It\^o perturbation of the Landau-Lifshitz equation}  

The It\^o version of the stochastic Landau-Lifchitz equation possesses many drawback and the main one follows directly from the invariance criterion that we derive in Corollary \ref{spherei}. Indeed, we have:

\begin{lemma}
The sphere $S^2$ is not invariant under the flow of the It\^o version of equation (\ref{ELL}).
\end{lemma}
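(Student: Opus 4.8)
The plan is to apply the strong invariance criterion of Corollary \ref{spherei} directly to the diffusion matrix $\sigma$ given in \eqref{sigmaetore}. By that corollary, the sphere $S^2=\lbrace x\in\R^3 : F(x)=x_1^2+x_2^2+x_3^2=1\rbrace$ is strongly invariant under the flow of \eqref{ELL} if and only if the diagonal entries of $\sigma$ vanish on $S^2$, i.e. $\sigma_{i,i}(t,x)=0$ for all $x\in S^2$ and all $i=1,2,3$. So the whole argument reduces to inspecting those diagonal entries and exhibiting a point of $S^2$ where at least one of them is nonzero.

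First I would read off from \eqref{sigmaetore} that the diagonal entries are $\sigma_{1,1}=\alpha(x_2^2+x_3^2)$, $\sigma_{2,2}=\alpha(x_1^2+x_3^2)$, and $\sigma_{3,3}=\alpha(x_1^2+x_2^2)$. Since $\alpha>0$, each of these is a nonnegative quantity that vanishes only when two of the three coordinates vanish. On $S^2$ this forces the remaining coordinate to have absolute value $1$; for instance $\sigma_{1,1}(t,x)=0$ on $S^2$ only at $x=(\pm1,0,0)$. There is manifestly no point of $S^2$ at which all three diagonal entries vanish simultaneously (that would require at least two coordinates to vanish from each equation, hence all coordinates to vanish, contradicting $x_1^2+x_2^2+x_3^2=1$). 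Equivalently, summing the squares, $\sum_{i=1}^3 \sigma_{i,i}(t,x)^2 = \alpha^2\big[(x_2^2+x_3^2)^2+(x_1^2+x_3^2)^2+(x_1^2+x_2^2)^2\big]$, which is strictly positive on $S^2$ since each summand is a square and they cannot all vanish at once.

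Then I would invoke condition \eqref{sc} of Theorem \ref{invarianceito} in the specialized form used in Corollary \ref{spherei}: for $F(x)=\sum_i x_i^2$ the Hessian is $2\,\mathrm{Id}$, so the obstruction term in \eqref{itoderiva} is $\sum_{i}\sum_l \sigma_{i,l}(t,x)^2\,dt$, the squared Frobenius-type norm of $\sigma$, and already its diagonal contribution $\sum_i \sigma_{i,i}(t,x)^2$ is strictly positive on $S^2$ by the previous step. Hence $d[F(\mu_t)]$ has a nonzero $dt$-term whenever $\mu_t\in S^2$, so $F(\mu_t)$ cannot stay equal to $1$, and $S^2$ is not strongly invariant under \eqref{ELL}. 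This also matches the direct computation already performed for the Kubo oscillator and for $F(\mu)=\mu\cdot\mu$ in Lemma \ref{invarianceLL}'s deterministic analogue.

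There is essentially no hard part here: the statement is a corollary of Corollary \ref{spherei} together with the explicit form \eqref{sigmaetore}. The only point requiring a line of care is the elementary observation that the three expressions $x_2^2+x_3^2$, $x_1^2+x_3^2$, $x_1^2+x_2^2$ have no common zero on $S^2$; alternatively one simply notes their sum equals $2(x_1^2+x_2^2+x_3^2)=2\neq 0$ on $S^2$, which closes the argument immediately.
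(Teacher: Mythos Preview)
Your proof is correct and follows essentially the same route as the paper: both invoke Corollary \ref{spherei} to reduce the question to whether the diagonal entries $\sigma_{i,i}=\alpha(x_j^2+x_k^2)$ can vanish on $S^2$, and both conclude that this is impossible for $\alpha>0$. The paper phrases the contradiction as ``the diagonal condition forces $\alpha=0$'', while you argue pointwise (no $x\in S^2$ annihilates all three, since their sum is $2\alpha$); your version is slightly more explicit and also notes that the full condition \eqref{sc} is the Frobenius norm of $\sigma$, of which the diagonal part is already positive.
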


\begin{proof}
By Corollary \ref{spherei}, the diffusion term must be zero on the sphere $S^2$ and all $t\in \R$. The condition on $\sigma$ on the diagonal terms implies that $\alpha=0$, i.e. that we can not have a dissipative term and we recover the Larmor equation in contradiction with our assumption that $\alpha \not= 0$. This concludes the proof.
\end{proof}

The previous result excludes the use of the It\^o formalism for a direct stochastic generalization of the Landau-Lifshitz equation. However, we can use the {\it invariantization method} exposed in Section \ref{sectioninvariantizationmethod} in order to obtain an It\^o model, related to the previous one, but which satisfies the invariance of the sphere $S^2$.

\subsubsection{Invariantization of the Landau-Lifshitz stochastic It\^o model}
\label{etoremodel}

Let us consider the stochastic LL equation (\ref{ELL}). The sphere $S^2$ is defined by the homogeneous function $F(x_1 ,x_2 ,x_3)=\di\sum_{i=1}^3 x_i^2$ of degree $2$. Let us consider the invariantized process defined by
\begin{equation}
\label{IELL}
\left \{ 
\begin{array}{lll}
d y_t & = & [ -\mu_t \wedge b- \alpha \mu_{t}\wedge ( \mu_{t}\wedge b )] dt+ \varepsilon [ -\mu_{t}\wedge dW_{t}- \alpha \mu_t \wedge \mu_t \wedge dW_t ] ,\\
\mu_t & = & \di\frac{y_t}{\parallel y_t\parallel} .
\end{array}
\right .
\end{equation}

A simple computation gives (see also (\cite{Etore}, Proposition 1)):

\begin{lemma}
The process $F(y_t )$ is random. Precisely, we have $d F(y_t) = 2\epsilon^2 (\alpha^2 +1 ) dt$.
\end{lemma}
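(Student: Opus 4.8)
The plan is to apply the multi-dimensional It\^o formula to $F(y_t)=\sum_{i=1}^3 y_{t,i}^2$, exactly as in the proof of Theorem \ref{invarianceito}. Writing $dy_t = g(t,y_t)\,dt + \varepsilon\,\tilde\sigma(y_t)\,dW_t$ with $g(t,y)=-y\wedge b-\alpha y\wedge(y\wedge b)$ and $\tilde\sigma$ the $3\times 3$ matrix from \eqref{sigmaetore} (with $x$ replaced by $y$), the It\^o formula gives
\begin{equation*}
dF(y_t)=2 y_t\cdot g(t,y_t)\,dt + 2\varepsilon\, y_t\cdot \tilde\sigma(y_t)\,dW_t + \varepsilon^2 \sum_{i,l}\tilde\sigma_{i,l}(y_t)^2\,dt,
\end{equation*}
since the Hessian of $F$ is $2\,\mathrm{Id}$. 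So I need to check two things: that the $dt$-drift term $y_t\cdot g(t,y_t)$ and the $dW_t$ term $y_t\cdot\tilde\sigma(y_t)$ both vanish identically (not just on $S^2$), and that the quadratic-variation term $\sum_{i,l}\tilde\sigma_{i,l}(y)^2$ equals $2(\alpha^2+1)\|y\|^2$ — but wait, the claim is $dF(y_t)=2\varepsilon^2(\alpha^2+1)\,dt$ with no factor $\|y_t\|^2$, so in fact $\sum_{i,l}\tilde\sigma_{i,l}(y)^2$ must reduce to $2(\alpha^2+1)$ on the constraint; since $y_0\in S^2$ and we will see $\|y_t\|$ is deterministic, this is consistent, and I should compute the sum and then restrict. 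Actually the cleanest route is: first establish $y_t\cdot g=0$ and $y_t\cdot\tilde\sigma(y)=0$ for all $y$, conclude $dF(y_t)=\varepsilon^2(\sum_{i,l}\tilde\sigma_{i,l}(y_t)^2)\,dt$, then compute the sum as a function of $\|y\|^2$ and use $\|y_0\|^2=1$ together with the resulting (now linear, deterministic) ODE for $F(y_t)$ to close the loop.

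First I would verify the vanishing of the two inner products. For $g$: $y\cdot(y\wedge b)=0$ and $y\cdot\bigl(y\wedge(y\wedge b)\bigr)=0$ because $y\wedge(\cdot)$ always produces a vector orthogonal to $y$; this is precisely the computation already done in Lemma \ref{invarianceLL}, so the drift contributes nothing. For the diffusion: the columns of $\varepsilon\tilde\sigma(y)$ are, componentwise, the vectors $-y\wedge e_k-\alpha y\wedge(y\wedge e_k)$ for $k=1,2,3$ where $e_k$ is the $k$-th basis vector; wait, more directly, $\tilde\sigma(y)dW_t = -y\wedge dW_t-\alpha y\wedge(y\wedge dW_t)$, and for any fixed vector $v$, $y\cdot(-y\wedge v-\alpha y\wedge(y\wedge v))=0$ by the same orthogonality. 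Hence $y^{\mathsf T}\tilde\sigma(y)=0$ identically, and the stochastic part of $dF(y_t)$ is zero.

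Next I would compute $\Sigma(y):=\sum_{i=1}^3\sum_{l=1}^3 \tilde\sigma_{i,l}(y)^2$, the squared Frobenius norm of the matrix \eqref{sigmaetore} evaluated at $y$. This is the one genuinely computational step and the main (though modest) obstacle: one squares and sums the nine entries, using repeatedly $y_1^2+y_2^2+y_3^2=\|y\|^2$. I expect the diagonal entries contribute $\alpha^2\sum_i(\|y\|^2-y_i^2)^2$ and the off-diagonal pairs contribute terms of the form $(y_j-\alpha y_i y_k)^2+(y_j+\alpha y_i y_k)^2 = 2y_j^2+2\alpha^2 y_i^2 y_k^2$ summed over the three pairs, and after collecting everything the result should simplify to $\Sigma(y)=2(\alpha^2+1)\|y\|^4$ — no, let me be careful about the homogeneity: $\tilde\sigma$ is homogeneous of degree... the diagonal entries are degree $2$ in $y$, the off-diagonal entries mix degree $1$ and degree $2$, so $\tilde\sigma$ is not homogeneous, and $\Sigma(y)$ will be a combination of a degree-$4$ and a degree-$2$ piece. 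The point is that on $S^2$ one gets $\Sigma(y)=2(\alpha^2+1)$. So I would compute $\Sigma$ in general, then note $dF(y_t)=\varepsilon^2\Sigma(y_t)\,dt$, and finally — since both the drift and diffusion of $y_t$ preserve $\|y_t\|$? no, they do not; rather, $F(y_t)=\|y_t\|^2$ satisfies $dF(y_t)=\varepsilon^2\Sigma(y_t)\,dt$ which is deterministic given $y_t$, and with $F(y_0)=1$; but $\Sigma(y_t)$ depends on $y_t$ not just on $F(y_t)$ unless the degree-$2$ and degree-$4$ pieces both reduce. The honest way: observe $\Sigma(y)=2(\alpha^2+1)\|y\|^2 \cdot(\text{something})$, and check directly that on the solution trajectory, which we already know from Lemma \ref{invarianceito}'s philosophy stays with $x_t=y_t/\|y_t\|\in S^2$... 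Actually the simplest closing argument: show $\Sigma(y)$ restricted to the cone of directions hit by $y_t$ is what matters, but most economically, I would verify $\Sigma(y)=2(\alpha^2+1)$ whenever $\|y\|^2$ solves the scalar ODE $\dot F = \varepsilon^2\Sigma$, $F(0)=1$; by uniqueness this forces $F\equiv$ the solution of $\dot F = 2\varepsilon^2(\alpha^2+1) F^{?}$. I would therefore present the Frobenius-norm computation, obtain $\Sigma(y)=2(\alpha^2+1)\|y\|^{2}$ on the relevant set (cross-checking the exact power by plugging $y=(1,0,0)$: the matrix becomes $\begin{pmatrix}\alpha & 0 & 0\\ 0 & \alpha & 1\\ 0 & -1 & \alpha\end{pmatrix}$, Frobenius-squared $=3\alpha^2+2$ — hmm, that is not $2(\alpha^2+1)$), which tells me the stated constant must come after also using $y\cdot\tilde\sigma=0$ to drop the rank-deficient direction, i.e. the effective contribution is $\Sigma(y)-\|\tilde\sigma(y)^{\mathsf T}\hat y\|^2\cdot(\dots)$; so the real final step is to recompute carefully and match $2\varepsilon^2(\alpha^2+1)$, and that reconciliation — pinning down the exact scalar — is where I would spend the most care, appealing to \eqref{sigmaetore} entrywise and to $\|y_0\|=1$.
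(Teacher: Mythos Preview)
Your overall strategy (apply the It\^o formula to $F(y)=\|y\|^2$, check the first-order terms vanish, and compute the Frobenius norm of the diffusion matrix) is exactly the computation the paper has in mind. But you have misread the very equation you are differentiating, and this is what generates all the later confusion about homogeneity and the mismatched constant.

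Look again at \eqref{IELL}: the drift and diffusion of $dy_t$ are \emph{not} evaluated at $y_t$ but at $\mu_t = y_t/\|y_t\|$. That is the whole point of the invariantized process in Definition~4.2. So the correct decomposition is
\[
dy_t = g(\mu_t)\,dt + \varepsilon\,\tilde\sigma(\mu_t)\,dW_t,\qquad \mu_t\in S^2\ \text{for all }t.
\]
The orthogonality arguments you give still work (since $y_t=\|y_t\|\mu_t$ is collinear with $\mu_t$, one has $y_t\cdot(\mu_t\wedge v)=0$ for any $v$), so the first-order terms in $dF(y_t)$ do vanish. The second-order term becomes $\varepsilon^2\sum_{i,l}\tilde\sigma_{i,l}(\mu_t)^2\,dt$, i.e.\ the Frobenius norm is evaluated \emph{on the sphere}, not at the unconstrained point $y_t$. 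There is no issue of inhomogeneity and no ODE to close.

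On the sphere the Frobenius norm is constant. Your check at $y=(1,0,0)$ actually confirms this once you fix a slip: the $(1,1)$ entry of \eqref{sigmaetore} is $\alpha(x_2^2+x_3^2)$, which is $0$ at $(1,0,0)$, not $\alpha$. The matrix is
\[
\begin{pmatrix}0&0&0\\0&\alpha&1\\0&-1&\alpha\end{pmatrix},
\]
with Frobenius square $2(\alpha^2+1)$, matching the claim. More invariantly, for $x\in S^2$ one has $\tilde\sigma(x)e_k=-x\wedge e_k-\alpha\,x\wedge(x\wedge e_k)$; the two summands are orthogonal and each has squared norm $1-x_k^2$, so $\sum_k\|\tilde\sigma(x)e_k\|^2=(1+\alpha^2)\sum_k(1-x_k^2)=2(1+\alpha^2)$.
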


As a consequence, Theorem \ref{invariantizationmethod} applies and we have:

\begin{lemma}
The invariantized stochastic differential equation associated to $F(x)=\di\sum_{i=1}^3 x_i^2$ and the It\^o stochastic differential equation (\ref{ELL}) is given by 
\begin{equation}
\label{etoreequation}
dx_t=\left[ -\frac{1}{2}\frac{ 2\epsilon^2 (\alpha^2 +1)}{2\epsilon^2 (\alpha^2 +1)t+1}x_t+\frac{1}{\sqrt{2\epsilon^2 (\alpha^2 +1)t+1}}f(t,x_t)\right]  dt+\frac{1}{\sqrt{2\epsilon^2 (\alpha^2 +1)t+1} }\sigma(t,x_t)dW_t ,
\end{equation}
\end{lemma}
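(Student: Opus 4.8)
The plan is to deduce the statement as a direct corollary of the Invariantization Theorem~\ref{invariantizationmethod}, the only real input being the preceding lemma that identifies $dF(y_t)$. So the first step is to check that the hypotheses of Theorem~\ref{invariantizationmethod} are satisfied for the present data. Here the submanifold is $S^2$, defined by $F(x_1,x_2,x_3)=\di\sum_{i=1}^3 x_i^2$, which is homogeneous of degree $q=2$; the process $y_t$ is the one governed by the first line of \eqref{IELL}, i.e. the It\^o process whose drift is $f(t,x)=-x\wedge b-\alpha\, x\wedge(x\wedge b)$ and whose diffusion coefficient is the matrix $\sigma(t,x)$ of \eqref{sigmaetore} (with the factor $\varepsilon$ absorbed). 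By the preceding lemma, $dF(y_t)=2\epsilon^2(\alpha^2+1)\,dt$, so in the notation of Theorem~\ref{invariantizationmethod} one has $h(t)=2\epsilon^2(\alpha^2+1)$, a (deterministic) constant; in particular $F(y_t)$ is a non-random process, which is precisely the standing assumption of the theorem.

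The second step is the bookkeeping of the auxiliary function $H$. Since $h$ is constant,
\begin{equation}
H(t)=1+\di\int_0^t h(s)\,ds=1+2\epsilon^2(\alpha^2+1)\,t,\qquad \dot H(t)=2\epsilon^2(\alpha^2+1).
\end{equation}
The third step is to substitute $q=2$, together with these expressions for $H$ and $\dot H$, into the conclusion of Theorem~\ref{invariantizationmethod}, namely
\begin{equation}
dx_t=\left[-\frac{1}{q}\frac{\dot H(t)}{H(t)}x_t+\frac{1}{H(t)^{1/q}}f(t,x_t)\right]dt+\frac{1}{H(t)^{1/q}}\sigma(t,x_t)\,dW_t,
\end{equation}
using $H(t)^{1/q}=\sqrt{H(t)}=\sqrt{2\epsilon^2(\alpha^2+1)t+1}$. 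This yields exactly \eqref{etoreequation}, and since $x_t=y_t/\sqrt{F(y_t)}$ by construction the invariantized process satisfies $F(x_t)=1$ for all $t\ge 0$, so $S^2$ is indeed invariant.

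I do not expect any genuine obstacle: the analytic content (the It\^o computation of $dF(y_t)$, which is where the constant $2\epsilon^2(\alpha^2+1)$ and, crucially, its \emph{non-randomness} come from) has already been discharged in the preceding lemma, and Theorem~\ref{invariantizationmethod} has already been proved. The only points requiring a little care are purely clerical: correctly reading off $q=2$ from the degree of $F$, keeping track of the constant $2\epsilon^2(\alpha^2+1)$ through $H$ and $\dot H$, and matching the generic $(f,\sigma)$ of Theorem~\ref{invariantizationmethod} with the drift and diffusion of \eqref{ELL}.
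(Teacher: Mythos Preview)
Your proposal is correct and follows exactly the approach intended in the paper: the paper's own proof consists of the single sentence ``This is a simple computation,'' and what you have written is precisely that computation spelled out, namely plugging $q=2$ and $h(t)=2\epsilon^2(\alpha^2+1)$ from the preceding lemma into Theorem~\ref{invariantizationmethod}.
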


\begin{proof}
This is a simple computation.
\end{proof}

We recover the {\bf Etore and al. version of the Stochastic Landau-Lifshitz equation} proposed in \cite{Etore}.

\subsection{About equilibrium points}

A natural question about the previous invariantized model is to up to which extent it answers to the reasonable constraints one waits for a stochastic version of the Landau-Lifshitz equation. For example, if one is interested in preserving equilibrium points of the initial system, the model is not satisfying. Indeed, we have:

\begin{lemma}
The points $\mu =\pm b /\parallel b\parallel$ are not equilibrium points of equation (\ref{etoreequation}).
\end{lemma}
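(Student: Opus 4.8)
The plan is to simply plug $x_t = \pm b/\norm{b}$ into the right-hand side of equation~\eqref{etoreequation} and check that the resulting stochastic differential is not identically zero, which is what would be required for these points to be equilibria of the invariantized system. First I would recall that $f(t,x) = -x\wedge b - \alpha x\wedge(x\wedge b)$ is the deterministic Landau-Lifshitz vector field, and that $\sigma(t,x)$ is the matrix given in~\eqref{sigmaetore}; both are evaluated at $x_t$. The key observation is that, although $f$ vanishes at $\mu = \pm b/\norm{b}$ (this is exactly the content of the earlier Lemma on equilibrium points of~\eqref{LLg}), the two remaining terms of~\eqref{etoreequation} do not.

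The main steps, in order, are as follows. First, evaluate the drift: at $x_t = \pm b/\norm{b}$ the term $\frac{1}{\sqrt{2\eps^2(\alpha^2+1)t+1}} f(t,x_t)$ vanishes since $f(t,\pm b/\norm{b})=0$, but the term $-\frac{1}{2}\frac{2\eps^2(\alpha^2+1)}{2\eps^2(\alpha^2+1)t+1}\,x_t = \mp \frac{\eps^2(\alpha^2+1)}{2\eps^2(\alpha^2+1)t+1}\frac{b}{\norm{b}}$ is nonzero for all $t\geq 0$, because $\eps\neq 0$ and $\alpha^2+1>0$. Hence the drift coefficient at these points is a nonzero multiple of $b$. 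Second, one should also check the diffusion coefficient $\frac{1}{\sqrt{2\eps^2(\alpha^2+1)t+1}}\sigma(t,x_t)$: substituting $x = \pm b/\norm{b}$ into~\eqref{sigmaetore} gives a nonzero matrix (for $\alpha\neq 0$ the diagonal entries $\alpha(x_i^2+x_j^2)$ are strictly positive, and even for $\alpha = 0$ the off-diagonal entries involving $\pm x_i$ are nonzero as long as $b$ has at least two nonzero components; in any case, since $b\neq 0$, $\sigma(t,\pm b/\norm{b})\neq 0$). Third, conclude: since a point $\mu$ is an equilibrium of a stochastic differential equation $dx_t = F(t,x_t)dt + G(t,x_t)dW_t$ only if $x_t\equiv\mu$ solves it, which forces $F(t,\mu)=0$ (and $G(t,\mu)=0$), and here the drift at $\mu=\pm b/\norm{b}$ is already nonzero, these points cannot be equilibria of~\eqref{etoreequation}.

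I do not expect any serious obstacle here: the computation is essentially immediate once one notes that the invariantization procedure introduces the extra drift term $-\frac{1}{q}\frac{\dot H(t)}{H(t)}x_t$, which by construction does not vanish at any point off the origin (it is the "radial pull-back" that keeps the process on the sphere), and in particular does not vanish at $\pm b/\norm{b}$. The only mild subtlety worth a sentence is that one might ask whether the drift and diffusion could conspire so that $x_t\equiv\pm b/\norm{b}$ is still a solution in some weak sense; but since the drift alone is nonzero and deterministic at that point, $F(t,\pm b/\norm{b})dt$ is a nonzero absolutely continuous contribution that cannot be cancelled by the martingale part, so no such solution exists. This is, as the paper says, a simple computation, and it illustrates the central drawback of the invariantization method: restoring invariance of $S^2$ comes at the price of destroying the equilibrium structure of the deterministic equation.
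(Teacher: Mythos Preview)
Your argument is correct, but it emphasizes a different mechanism from the one the paper uses. The paper's own proof looks only at the diffusion coefficient: it computes $\sigma(t,\pm b/\norm{b})\cdot v = -(\pm b/\norm{b})\wedge v - \alpha(\pm b/\norm{b})\wedge\bigl((\pm b/\norm{b})\wedge v\bigr)$ and observes that this vanishes only when $v$ is collinear with $b$, so $\sigma(t,\pm b/\norm{b})$ is not the zero matrix and the points cannot be equilibria. Your proof instead leads with the drift: you point out that the invariantization correction $-\tfrac{1}{2}\tfrac{\dot H}{H}\,x_t$ is a nonzero radial term at $x_t=\pm b/\norm{b}$, so the drift itself already fails to vanish, and you add the diffusion check as a secondary observation.

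Both arguments are sufficient on their own, since an equilibrium of an It\^o SDE requires both drift and diffusion to vanish. Your route has the merit of isolating exactly where the invariantization procedure breaks the equilibrium structure (the added radial pull-back), which makes the phenomenon conceptually clearer. The paper's route, on the other hand, is tailored to what comes next: by identifying that $\sigma(t,\pm b/\norm{b})\cdot v$ vanishes precisely when $v$ is parallel to $b$, it directly motivates the subsequent modification of the noise to $b\,dW_t$, which restores the equilibria. If you want your write-up to connect smoothly to the following lemma, it would be worth adding a line making this ``collinearity'' observation explicit.
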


\begin{proof}
For $\mu=\pm b/\parallel b\parallel$, we have for all $v\in \R^3$ that  
\begin{equation}
\sigma(t,\pm b/\parallel b\parallel ).v= -\di\frac{b}{\parallel b\parallel} \wedge v -\pm \alpha \di\frac{b}{\parallel b\parallel} \wedge( \pm\di\frac{b}{\parallel b\parallel} \wedge v) .
\end{equation}
The second term is always zero but the first one is only zero when $v$ is collinear with $b$. However, as $v$ takes arbitrary values, we can not ensure this equality. As a consequence, the initial equilibrium points are destroyed under the stochastic perturbation.
\end{proof}

It must be noted that the previous problem can be easily solved by modifying a little bit the modeling of the stochastic behavior of the effective magnetic field. Indeed, let us consider instead of $dW_t$ the following vector 
\begin{equation}
b dW_t ,
\end{equation}
where $W_t$ is a one dimensional Brownian motion. This assumptions is equivalent to say that we consider only stochastic behavior in the direction of the initial field $b$. This is of course very particular, but in this case the new model preserves the equilibrium points of the initial system:

\begin{lemma}
Let us consider the modified Etore and al. stochastic Landau-Lifshitz equation defined by 
\begin{equation}
\label{modifiedetoreequation}
dx_t=\left[ -\frac{1}{2}\frac{ 2\epsilon^2 (\alpha^2 +1)}{2\epsilon^2 (\alpha^2 +1)t+1}x_t+\frac{1}{\sqrt{2\epsilon^2 (\alpha^2 +1)t+1}}f(t,x_t)\right]  dt+\frac{1}{\sqrt{2\epsilon^2 (\alpha^2 +1)t+1} }\sigma(t,x_t). \left [ bdW_t \right ] ,
\end{equation}
where $\sigma$ is defined by equation (\ref{sigmaetore}) and $W_t$ is a one dimensional Brownian motion. This equation possesses as equilibrium points $\pm b/\parallel b\parallel$.
\end{lemma}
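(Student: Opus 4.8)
The plan is to show that the constant process $x_t\equiv\mu^{\pm}:=\pm b/\parallel b\parallel$ solves \eqref{modifiedetoreequation}, which amounts to checking that $\mu^{\pm}$ annihilates both the drift and the diffusion coefficient of that equation. The computation rests on one algebraic remark: applying the linear operator $\sigma(t,x)=-x\wedge\cdot\;-\alpha\,x\wedge(x\wedge\cdot)$ to the fixed vector $b$ returns exactly the deterministic Landau--Lifshitz field, $\sigma(t,x).b=-x\wedge b-\alpha\,x\wedge(x\wedge b)=f(t,x)$ --- this can be read off \eqref{sigmaetore} entry by entry, or directly from the definition of $\sigma$. Thus, after the substitution $dW_t\mapsto b\,dW_t$, the whole stochastic input of the equation is driven by the vector field $f$ itself, scaled by $1/\sqrt{2\epsilon^2(\alpha^2+1)t+1}$.

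Step 1, the diffusion part. By the lemma giving the equilibria of the deterministic LL equation we have $f(t,\mu^{\pm})=0$, since $(\pm b/\parallel b\parallel)\wedge b=0$ forces both $\mu^{\pm}\wedge b=0$ and $\mu^{\pm}\wedge(\mu^{\pm}\wedge b)=0$. Combined with the identity above this gives $\sigma(t,\mu^{\pm}).[b\,dW_t]=f(t,\mu^{\pm})\,dW_t=0$ for all $t$, so the martingale part of \eqref{modifiedetoreequation} vanishes at $\mu^{\pm}$.

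Step 2, the drift, which is where the only real subtlety lies. Writing the drift as $-\tfrac12\frac{\dot H(t)}{H(t)}x+\frac1{\sqrt{H(t)}}f(t,x)$, the second term vanishes at $\mu^{\pm}$ by Step 1, so everything reduces to the term $-\tfrac12\frac{\dot H}{H}x$. I expect this to be the main obstacle: one must be careful that the function $H$ relevant to the \emph{modified} model is not the one inherited verbatim from the original Etore equation, because Theorem \ref{invariantizationmethod} requires $F(y_t)$ to be non random. Indeed, applying It\^o's formula to $F(y)=\parallel y\parallel^2$ along the auxiliary process $dy_t=f(t,y_t)\,dt+\varepsilon f(t,y_t)\,dW_t$ and using $y\cdot f(t,y)=0$ gives $dF(y_t)=\varepsilon^2\parallel f(t,y_t)\parallel^2\,dt$, so the correction rate $h(t)$ --- hence $\dot H(t)$ --- is proportional to $\parallel f(t,y_t)\parallel^2$. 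For the solution issued from $y_0=\mu^{\pm}\in S^2$ one has $f(t,\mu^{\pm})=0$, hence $y_t\equiv\mu^{\pm}$ by uniqueness, hence $h\equiv0$, $H\equiv1$, $\dot H\equiv0$, and the drift vanishes identically. Equivalently, and more transparently, one bypasses \eqref{modifiedetoreequation} and argues at the level of the invariantized system directly: started at $y_0=\mu^{\pm}\in S^2$ the auxiliary process satisfies $dy_t=f(t,y_t)(dt+\varepsilon\,dW_t)$ with $f(t,\mu^{\pm})=0$, so $y_t\equiv\mu^{\pm}$, whence $\parallel y_t\parallel\equiv1$ and the invariantized process $x_t=y_t/\parallel y_t\parallel\equiv\mu^{\pm}$ is constant. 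This is precisely the assertion that $\mu^{\pm}=\pm b/\parallel b\parallel$ are equilibrium points of the modified model.
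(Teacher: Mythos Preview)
The paper's own proof is one line: it refers back to the preceding lemma and notes that, since the stochastic input is now $b\,dW_t$, the vector to which $\sigma(t,\mu^{\pm})$ is applied is always collinear with $b$, so the diffusion term vanishes. The drift is not discussed.

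Your argument goes further and is the more careful of the two. You make explicit the identity $\sigma(t,x).b=f(t,x)$, which immediately kills both the diffusion and the $f$-part of the drift at $\mu^{\pm}$, and you rightly isolate the remaining term $-\tfrac12\frac{\dot H}{H}\,x$: taken literally with $H(t)=2\varepsilon^2(\alpha^2+1)t+1$, this term does \emph{not} vanish at $\mu^{\pm}$, so the displayed equation \eqref{modifiedetoreequation}, read verbatim, does not have $\mu^{\pm}$ as fixed points. Your resolution---recomputing $dF(y_t)=\varepsilon^2\lVert f(t,x_t)\rVert^2\,dt$ for the one-dimensional noise and then arguing directly on the invariantized system $dy_t=f(t,x_t)(dt+\varepsilon\,dW_t)$, $x_t=y_t/\lVert y_t\rVert$---is the correct way to salvage the claim: started at $\mu^{\pm}$ the auxiliary process is stationary, hence so is $x_t$. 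Note, incidentally, that your recomputed $dF(y_t)$ is not a deterministic function of $t$ for generic initial data, so Theorem \ref{invariantizationmethod} does not actually apply to the modified model and the closed form \eqref{modifiedetoreequation} with that specific $H$ is not the equation satisfied by $x_t$ in the first place; your direct argument on the $(y,x)$ system sidesteps this issue entirely.
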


\begin{proof}
This follows easily from the previous proof only saying that $v$ is always collinear to $b$.
\end{proof}
 
\section{The invariantization method as a stochastic perturbation}

Although the invariantization method leads to a simplest equation than the projection procedure, it is not very easy to understand the procedure as a stochastic perturbation of the deterministic model. In this Section, assuming that the diffusion is governed by a small parameter $0<\epsilon <<1$, we write the invariantized equation as a perturbation.

\subsection{Small perturbation and invariantization}

In the following, we use the notations of Section \ref{sectioninvariantizationmethod}. For $\sigma=0$, the invariantized process reduces to the deterministic equation. Let us assume that $\sigma$ is of the form 
\begin{equation}
\sigma (x)=\epsilon \sigma_0 (x),\ \ 0< \epsilon <<1 ,
\end{equation}
where $\sigma_0$ and $f$ satisfy the invariance conditions.\\

Using the It\^o formula, we obtain
\begin{equation}
dF(y_t) = \di\nabla F(y_t).dy_t +\di\frac{1}{2} \di{\partial^2}{\partial y^2} dy_t .dy_t .
\end{equation}
As $dy_t .dy_t = \epsilon^2 \sigma_0^2 (x_t) dt$, and $F$ and $\sigma_0$ satisfy the invariance relation, we finally obtain
\begin{equation}
dF(y_t) = \epsilon^2 \di\frac{1}{2} \di\frac{\partial F}{\partial y^2} \sigma_0^2 (x_t) dt.
\end{equation} 
Denoting by $gamma (t)$ the function
\begin{equation}
\gamma (t)=\di\frac{1}{2} \di\int_0^t \frac{\partial F}{\partial y^2} \sigma_0^2 (x_t) dt ,
\end{equation}
we then obtain using Theorem \ref{invariantizationmethod} a function $h_{\epsilon}$ of the form 
\begin{equation}
h_{\epsilon} (t)=\epsilon^2 \gamma (t) ,
\end{equation}
and a function $H_{\epsilon}$ of the form
\begin{equation}
H_{\epsilon} (t)=1+\epsilon^2 \delta (t) .
\end{equation}
As a consequence, we can develop the drift part with respect to $\epsilon$ and we obtain for the invariantized process an equation of the form 
\begin{equation}
\label{perturbationinvariantized}
dx_t =\left ( 
f +\di\frac{1}{q} (\epsilon^2 )^{1/q} \star +\dots 
\right )
dt
+
\left (
\epsilon \sigma_0 + \di\frac{1}{q} (\epsilon^2 )^{1/q} \tilde{\star} +\dots 
\right )
dW_t .
\end{equation}
We do not search for explicit expression of the perturbation terms $\star$ and $\tilde{\star}$.

\subsection{Limitation of the method}

From a modeling point of view, we believe that a stochastic model of a deterministic equation must satisfy the following constraints:
\begin{itemize}
\item First, the drift part must be relied early to the deterministic equation and moreover must be understandable as a perturbation of it, i.e. of the form 
\begin{equation}
\label{RP}
f(t,x(t))+P(t,x(t))
\end{equation}
where $P(t,x(t))$ is the perturbation term.

\item Second, the new equation must be easy to interpret and must keep a sense with respect to the field of applications.
\end{itemize}
\vskip 2mm
What can be said about the invariantization method ? \\

The first modeling constraint is then satisfied by our invariantization method. However, the perturbation term obtained in equation (\ref{perturbationinvariantized}) is very complicated and the role of each term in the dynamics is not easily recovered.

\section{Conclusion and perspectives}

For It\^o stochastic perturbation of ordinary differential equations, we have derived a general method allowing to preserve invariance of a given codimension one submanifold under the stochastic flow. This method has however some limitations and lead to difficulties in the interpretation of the resulting model from a perturbative point of view. A natural question is then to find other stochastization procedure which still use the It\^o formalism for stochastic differential equations but for which invariance can be ensured under reasonable constraints. We refer to \cite{ckp} where this problem is discussed in general in the framework of random ordinary differential equations and used to construct a new model of a stochastic Landau-Lifshitz equation.

\end{document}